\newtheorem{theorem}{Theorem}
\newtheorem{corollary}{Corollary}
\newtheorem{lemma}{Lemma}
\newtheorem{example}{Example}
\newcommand{\STree}{\mathsf{STree}}
\newcommand{\occ}{\mathit{occ}}
\newcommand{\leaf}[1]{\mathit{leaf}(#1)}
\newcommand{\mtt}{\mathtt}
\newcommand{\plp}[1]{\mathit{PLP}(#1)}
\newcommand{\credit}{\mathit{credit}}
\begin{document}

\title{Sliding suffix trees revisited}
\author[1]{Laurentius~Leonard}
\author[1]{Shunsuke~Inenaga}
\author[2]{Hideo~Bannai}
\author[3]{Takuya~Mieno}

\affil[1]{Kyushu University, Japan}
\affil[3]{Institute of Science Tokyo, Japan}
\affil[4]{University of Electro-Communications, Japan}

\date{}
\maketitle

\begin{abstract}
  The \emph{sliding suffix tree} (Fiala \& Greene, 1989) is a suffix tree
  that is maintained for a sliding window $W_i = T[i..i+d-1]$ of size $d$
  that shifts over an input text $T$ of length $n$ from left to right, for increasing $i = 1, \ldots, n-d+1$.
  It is known that the sliding suffix tree can be maintained in $O(n \log \sigma)$ time with $O(d)$ space, where $\sigma$ is the alphabet size.
  Updating the sliding suffix tree from $W_i = T[i..i+d-1]$ to $W_{i+1} = T[i+1..i+d]$ requires the following three major tasks:
  (1) Delete the leaf that represents the longest suffix $W_i$,
  (2) Insert new leaves that represent the suffixes of $W_{i+1}$ that appear exactly once in $W_{i+1}$, and
  (3) After the leaf deletion due to Task (1) and each leaf insertion due to Task (2), maintain the label $\langle \ell, r \rangle$ of every edge as a \emph{valid} pair in the new window $W_{i+1}$, such that $i+1 \leq \ell \leq r \leq i+d$.
  In this paper, we present the first algorithm that performs Task (3) in $O(1)$ worst-case time per node deletion/insertion,
  which leads to another alternative to efficient sliding suffix tree construction.
  This is an improvement over
  the existing algorithms by Larsson (1996, 1999) and by Senft (2005)
  both of which can only perform Task (3) in $O(1)$ \emph{amortized} time.
  Our key data structure is a non-trivial extension of \emph{leaf pointers},
  which were originally proposed by Brodnik and Jekovec (2018) for pattern matching with sliding suffix trees.
\end{abstract}

\section{Introduction}

The suffix tree~\cite{Weiner} of a string $T$, denoted $\STree(T)$,
is an edge-labeled rooted tree that represents all suffixes of $T$.
It is well known that
$\STree(T)$ can be stored in $O(n)$ space for any string $T$ of length $n$
by representing each edge label $x$ with a pair $\langle i,j \rangle$ of positions in $T$
such that $x = T[i..j]$.
A suffix tree is a powerful string indexing data structure with a myriad of applications~\cite{DanGusfield}.

Among the variations of the suffix tree is the sliding suffix tree, proposed by
Fiala and Greene~\cite{Fiala1989}.
Let $W_i = T[i..i+d-1]$ be a \emph{sliding window} of (fixed) size $d$ over a text $T$ of length $n$.
The sliding suffix tree is the suffix tree of sliding window $W_i$ that is maintained and updated for increasing $i = 1, \ldots, n-d+1$.
Sliding suffix trees are core data structures
in data compression~\cite{Fiala1989,Larsson1996,Larsson1999} including the original variant of the Lempel-Ziv 77 compression~\cite{LZ77},
as well as pattern matching~\cite{Brodnik2018},
computing maximal absent words (MAWs)~\cite{CrochemoreHKMPR20},
minimal unique substrings (MUSs)~\cite{MienoFNIBT22},
and string net frequencies (NFs)~\cite{MienoI25},
in the sliding window model.

Shifting the current window $W_i = T[i..i+d-1]$ to the next window $W_{i+1} = T[i+1..i+d]$ involves deleting the leftmost character $T[i]$ and appending a new character $T[i+d]$ to the window.
Thus, updating $\STree(W_i)$ to $\STree(W_{i+1})$ requires the following major tasks in the data structure:
\begin{enumerate}
  \item[(1)] Delete the leaf that represents the longest suffix $W_i = T[i..i+d-1]$.
  \item[(2)] Insert new leaves that represent the suffixes of $W_{i+1} = T[i+1..i+d]$ that appear exactly once in $W_{i+1}$, and
  \item[(3)] After the leaf deletion due to Task (1) and each leaf insertion due to Task (2), maintain the label $\langle \ell, r \rangle$ of every edge as a \emph{valid} pair in the new window $W_{i+1} = T[i+1..i+d]$, such that $i+1 \leq \ell \leq r \leq i+d$.
\end{enumerate}
We do not scan all edges after each shift. Rather, whenever a local topological update creates, modifies, or needs to materialize an edge label, we compute a fresh index-pair for that edge in $O(1)$ worst-case time.
While Task (2) also appears in online suffix tree constructions~\cite{Ukkonen}, Tasks (1) and (3) are specific to the sliding window model.

Larsson~\cite{Larsson1996,Larsson1999} pointed out that
Task (1) is a mere operation that takes $O(1)$ \emph{worst-case} time.
Larsson~\cite{Larsson1996,Larsson1999} also showed that
Task (2) can be done in $O(1)$ \emph{amortized} time per character,
by adapting Ukkonen's online suffix tree construction~\cite{Ukkonen}.

The focus of this paper is Task (3) of updating edge labels,
for which the following two existing approaches are known:
\begin{itemize}
  \item The first one is the so-called \emph{credit-based} approach
    proposed by Larsson~\cite{Larsson1996,Larsson1999},
    that is based on the ideas from Fiala and Greene's earlier work~\cite{Fiala1989}.
    The basic idea of this method is to propagate a credit to each edge label that needs to be updated, and it can be shown that the time cost for updating each edge label $\langle \ell, r \rangle$ takes $O(1)$ \emph{amortized} time.
    The worst-case time is however $O(d)$ per left-end character deletion and right-end character insertion.
    In addition, in Section~\ref{sec:loweround} we show that the worst-case time of the credit-based method is $\Omega(d)$ for a single leaf insertion or deletion on some string.
  \item The second one is the so-called \emph{batch-based} approach proposed by Senft~\cite{Senft2005}. Senft's method uses standard batch-updates on overlapping blocks of size $2d$ over the text $T$,
    shifting a window of size $d$ in each block to the right,
    reconstructing the suffix tree from scratch when the window no longer fits in the block, moving on to the next block prefixed by the window.
    It is clear that the batch-based approach takes $O(1)$ \emph{amortized} time.
    However,
    the worst-case time is $O(d)$ per character deletion/insertion as well as for leaf deletion/insertion.
\end{itemize}

In this paper, we present the first method, named the \emph{leaf-pointer} based approach, that is able to perform Task (3) in $O(1)$ \emph{worst-case} time for each edge label update that is incidental to a leaf insertion or deletion.
We first make a simple observation that maintaining the label for each edge $u \rightarrow v$ reduces to maintaining a pointer from node $v$ to a leaf in the subtree rooted at $v$ (Lemma~\ref{the:getPair} in Section~\ref{sec:fresh_index}).
Then, we present how to maintain a pointer from each node to
a specific leaf in its subtree in $O(1)$ worst-case time (Section~\ref{sec:update_leafpointers}).
This is achieved by introducing our concepts of \emph{primary} and \emph{secondary} nodes and by maintaining leaf pointers built on them, which may be of independent interest and may be used in other applications.
Table~\ref{table:comparisons} summarizes the comparison of the existing methods and our proposed method.
\begin{table}[hbt]
  \centering
  \caption{Comparison of algorithms for maintaining a suffix tree of a sliding window of size $d$ on a string of length $n$ over an alphabet of size $\sigma$. In the table, ``amortized'' and ``worst case'' refer to ``amortized $O(1)$ time'' and ``worst case $O(1)$ time'', respectively. The $O(\log \sigma)$ term in the total time complexities is due to maintaining branches of nodes.}
  \begin{tabular}{|l||c|c|c|c|c|}
    \hline
    algorithms                             & Task (1)   & Task (2)           & Task (3) & total time & working space \\
                                           &            &                    & per node deletion/insertion &  &  \\ \hline \hline
    credits~\cite{Larsson1996,Larsson1999} & worst case & amortized
                                           & amortized  & $O(n \log \sigma)$ & $O(d)$                                \\ \hline
    batch~\cite{Senft2005}                 & worst case & amortized
                                           & amortized  & $O(n \log \sigma)$ & $O(d)$                                \\ \hline
    leaf pointers [ours]                   & worst case & amortized
                                           & worst case & $O(n \log \sigma)$ & $O(d)$                                \\ \hline
  \end{tabular}
  \label{table:comparisons}
\end{table}

\paragraph*{Related work.}
Brodnik and Jekovec~\cite{Brodnik2018} used a similar notion to our leaf pointers. They showed how to perform pattern matching in $O(m \log \sigma + \occ)$ time on the sliding suffix tree with $O(d)$ working space, where $m$ is the query pattern length and $\occ$ is the number of \emph{all} pattern occurrences in the current window.
Since Brodnik and Jekovec~\cite{Brodnik2018} utilize the credit-based algorithm to maintain their leaf pointers, their method takes $O(1)$ \emph{amortized} time for each leaf insertion/deletion.

\section{Preliminaries}
\subsection{Strings}
\label{subsecPrelim}
Let $\Sigma$ be an alphabet of size $\sigma$. An element of the set $\Sigma^*$ is a string.
The length of a string $w$ is denoted by $|w|$.
For a string $w = pts$, $p$, $t$, and $s$ are called a
\emph{prefix}, \emph{substring}, and \emph{suffix} of $w$, respectively.
A suffix $s$ of $w$ that occurs twice or more in $w$ is said to be \emph{repeating} in $w$.
For a string $w$,
$w[i]$ denotes the $i$-th character of string $w$ for $1 \leq i \leq |w|$, and
$w[i..j]$ denotes the substring $w[i]w[i+1]\cdots w[j]$ of string $w$ for $1 \leq i \leq j \leq |w|$.

\subsection{Sliding suffix trees}
The suffix tree~\cite{Weiner,DanGusfield} of a string $T$, denoted $\STree(T)$,
is the compacted trie of all suffixes of $T$.
Note that each non-leaf node of $\STree(T)$ has at least two children (unless $T$ is a unary string),
and thus, the number of nodes in $\STree(T)$ is linear in the length of $T$.
Each edge is labeled by a non-empty substring of $T$ and the labels of all outgoing edges of the same node begin with distinct characters. Any substring $w$ of $T$ can be associated with a corresponding location (either on a node or a position on an edge) in the suffix tree where the concatenated path label from the root to the location is $w$.
Depending on the context, we will use strings and their locations in the suffix tree interchangeably.
An edge of a suffix tree from node $u$ to node $v$ is denoted by $u \to v$.
The string label $x$ for each edge $u \to v$ is represented by an index-pair $\langle\ell, r\rangle$ such that $x = T[\ell .. r]$.
This way, $\STree(T)$ can be stored in space linear in the length of $T$.
The sliding suffix tree~\cite{Fiala1989,Senft2005,Brodnik2018}
is the suffix tree for a sliding window $W_i = T[i..i+d-1]$ of size $d$,
that shifts over the input string $T$ of length $n$ from left to right,
for increasing $i = 1, \ldots, n-d+1$.
This is a version of suffix trees a.k.a. Ukkonen trees~\cite{Ukkonen}, in which
non-repeating suffixes of $W_i$ are represented by leaves
while repeating suffixes of $W_i$ are represented by (possibly implicit, non-branching) internal nodes.
Let us consider a sliding window $W_i = T[i..i+d-1]$ over an input string $T$.
For ease of description, let $\ell_i = i$ and $r_i = i+d-1$.
Let $W_i[s..r_i]$ be the longest repeating suffix of $W_i$.
For any $\ell_i \leq k < s$, $\mathit{leaf}_i(k)$ denotes the leaf of
the suffix tree of $W_i$ that represents the suffix $W_i[k..r_i]$.
The sliding suffix tree $\STree(W_i)$ for $W_i$
consists of the leaves $\mathit{leaf}_i(k)$ for $\ell_i \leq k < s$,
branching internal nodes, and the root.
When there is no risk of confusion, we simply denote it by $\leaf{k}$.
Figure~\ref{fig:slidingsuftree} shows an example of how a sliding suffix tree changes between two iterations.
On the left side of Figure~\ref{fig:slidingsuftree}, strings $\mtt{a}, \mtt{abaca}$ and the empty string are considered to be located on nodes, while strings $\mtt{bac}, \mtt{c}$ are considered to be located on edges.

\begin{figure}[tb]
  \centering
  \includegraphics[scale=0.4]{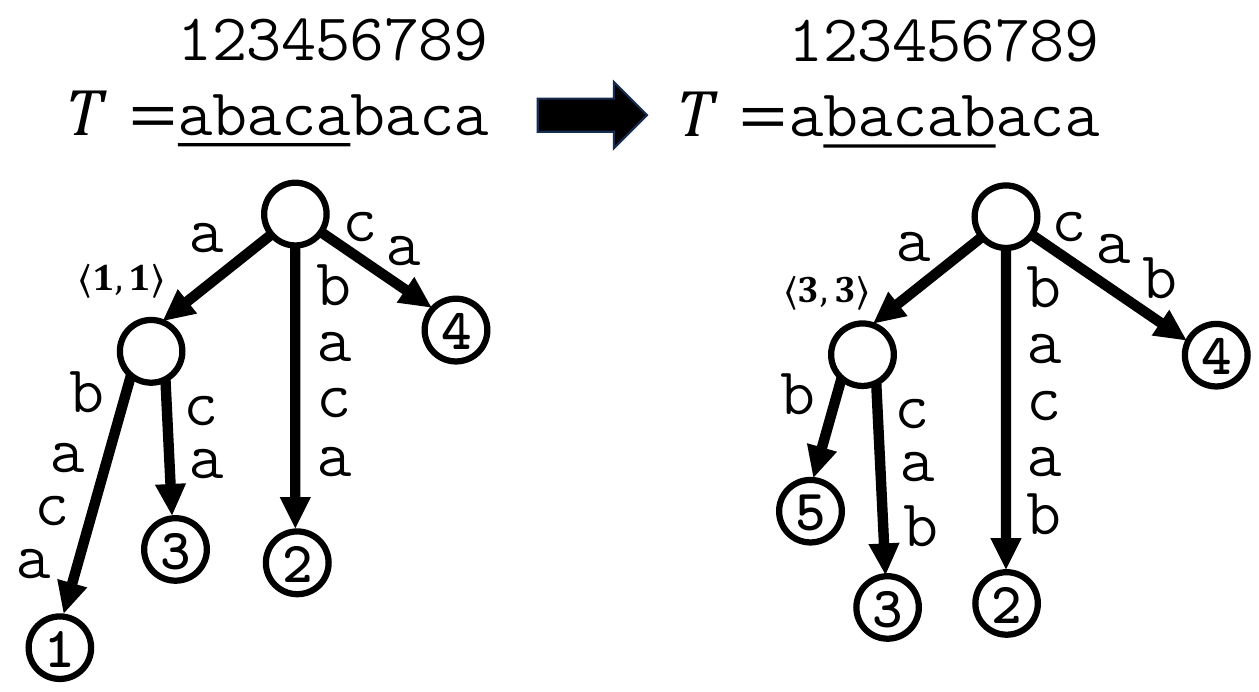}
  \caption{Sliding suffix tree across two iterations with $T=\mtt{abacabaca}$ and $|W|=5$. $\leaf{1}$ is deleted in the second iteration due to its corresponding suffix $\mtt{abaca}$, which would grow into $\mtt{abacab}$, not existing in the new window. The edge labeled $\mtt{a}$ shows an example of an edge whose index-pair becomes outdated; if it was represented by $\langle 1,1 \rangle$ in the
  first iteration, an update is required as the index $1$ contained in the interval is no longer inside the window.}
  \label{fig:slidingsuftree}
\end{figure}

In our sliding suffix tree algorithm,
only the last $d$ characters of $T$
are explicitly stored to achieve $O(d)$ space efficiency, as done in the credit-based approach by Larsson~\cite{Larsson1996,Larsson1999}.
Thus, any edge whose index-pair contains an index $k<\ell_i$, or in other words any edge with an index-pair corresponding to an occurrence of the edge-label that is no longer entirely inside the window, needs to be updated
to a \emph{fresh} index-pair $\langle \ell, r \rangle$ such that $[\ell..r] \subseteq [\ell_i..r_i]$.
A {\em leaf pointer} on a suffix tree is defined as a pointer from each internal node to any of its descendant leaves.
Leaf pointers were originally proposed by Brodnik and Jekovec~\cite{Brodnik2018} for their use in online pattern matching.
Figure~\ref{fig:LeafPointers} shows configuration examples of leaf pointers.
Unlike with online suffix trees, maintaining them for a sliding suffix tree is not a trivial task.
Brodnik and Jekovec~\cite{Brodnik2018} claimed that
one can update each leaf pointer in $O(1)$ amortized time per iteration,
by extending the credit-based method by Larsson~\cite{Larsson1996,Larsson1999}.
\begin{figure}[tb]
  \centering
  \centering
  \includegraphics[scale=0.4]{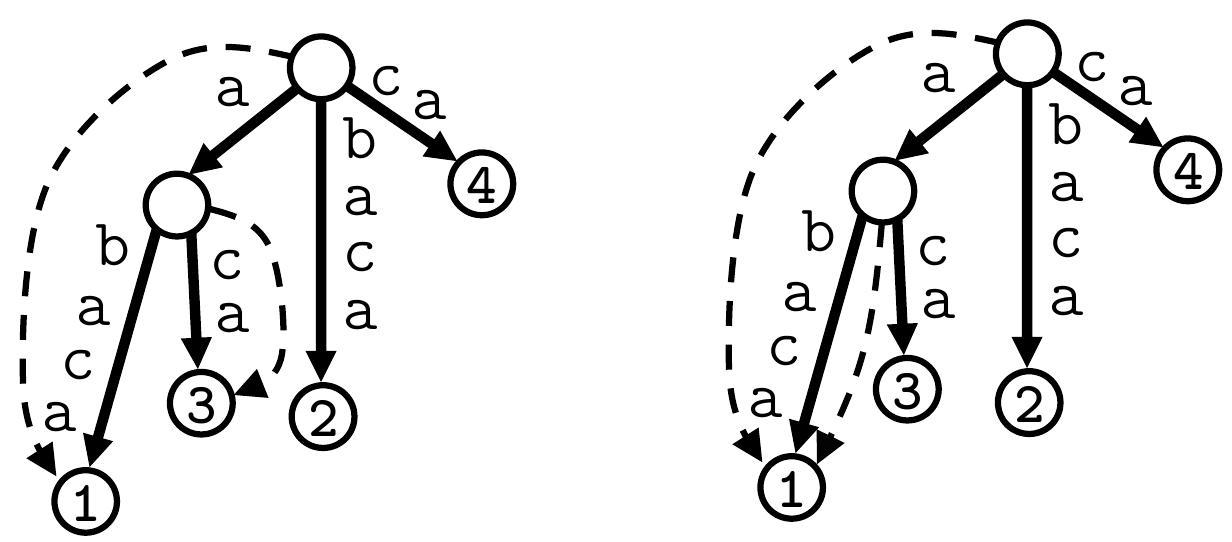}
  \caption{Two possible configurations of leaf pointers for the suffix tree of $\mtt{abaca}$. The dashed arrows depict leaf pointers.}
  \label{fig:LeafPointers}
\end{figure}

In the rest of this paper, we will prove the following theorem:
\begin{theorem} \label{theo:main}
  Let $T$ be a string of length $n$
  and $d$ be the size of the sliding window $W_i = T[\ell_i..r_i]$ over $T$.
  There is an algorithm of $O(d)$ space that maintains
  a leaf pointer of every node in the sliding suffix tree
  in $O(1)$ worst-case time per update.
\end{theorem}
\section{Getting fresh index-pairs from leaf pointers}\label{sec:fresh_index}

We assume that the topology of the suffix tree is maintained, along with 
the string length corresponding to each node and the index $k$ for each $\leaf{k}$.
Provided that leaf pointers are maintained,
we can find a fresh index-pair of any edge $u\rightarrow v$ with edge label $x$ in constant time as follows:
If $v$ is a leaf, it is clear that $\langle k_v, r_i \rangle$ is a fresh index-pair of the edge where $v = \leaf{k_v}$ and $r_i$ is the right-end of the current window.
Otherwise,
traverse the leaf pointer of $v=ux$ to the destination leaf $\leaf{k}$.
The existence of the leaf implies that the suffix $T[k .. r_i]$ must occur inside the current window $W_i$,
and $ux$ is a prefix of this suffix.
Thus, $\langle k +|u| , k + |ux|-1 \rangle$ is a fresh index-pair of $u\rightarrow v$,
corresponding to the substring $x$.
Additionally,
the edge label $\langle k +|u| , k + |ux|-1 \rangle$
is also \emph{strongly fresh},
meaning that
the preceding $u$ at start-index $k$ is also inside the window.
Figure~\ref{fig:GetIndexPair1} illustrates the relationship between $u$,$v$, and $\leaf{k}$.
\begin{figure}[tb]
  \centering
  \begin{minipage}{.46\textwidth}
    \centering
    \includegraphics[scale=0.5]{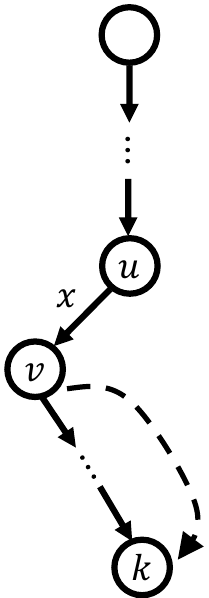}
    \caption{A figure showing the relationship between $u$,$v$, and $\leaf{k}$ in the context of computing the index-pair in the general case.}
    \label{fig:GetIndexPair1}
  \end{minipage}\hspace{\fill}
  \begin{minipage}{.49\textwidth}
    \centering
    \includegraphics[scale=0.5]{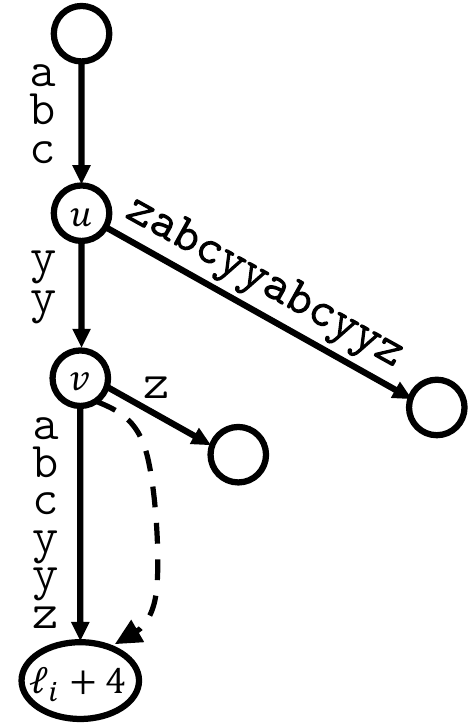}
    \caption{The part of the suffix tree that starts with the character $\mtt{a}$ in Example \ref{ex:GetIndexPair}.}
    \label{fig:GetIndexPair2}
  \end{minipage}
\end{figure}
The following lemma immediately follows:
\begin{lemma}\label{the:getPair}
  Assume that the leaf pointers for all internal nodes are maintained.
  Then, for any edge $u\rightarrow v$, 
  one can obtain in $O(1)$ worst-case time an index-pair $\langle \ell, r\rangle$ such that
  $T[\ell .. r]$ matches the edge's label, and $[\ell - |u| .. r] \subseteq [\ell_i .. r_i]$.
\end{lemma}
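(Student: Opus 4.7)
The plan is to observe that the leaf pointer stored at $v$ already encodes everything we need: a currently valid occurrence of the path label of $v$ inside the window. Concretely, given an edge $\edge{u}{v}$, I follow the leaf pointer at $v$ to obtain some descendant leaf $\leaf{k}$; this takes $O(1)$ time by assumption. Writing $v=ux$, the fact that $\leaf{k}$ lies in the subtree of $v$ means that $ux$ appears in the text starting at position $k$, so $T[k\mathbin{..}k+|ux|-1]=ux$.

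I then set $\ell := k + |u|$ and $r := k + |ux| - 1$. Because the maintenance algorithms of Section~\ref{sec:tree_structure} already store $\depth{\cdot}$ at each internal node, we can read $|u|=\depth{u}$ and $|ux|=\depth{v}$ in constant time, so $\ell$ and $r$ are produced in $O(1)$. By construction $T[\ell\mathbin{..}r]=x$, which is exactly the label of the edge $\edge{u}{v}$, giving the first part of the conclusion.

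For the freshness inclusion $[\ell-|u|\mathbin{..}r]\subseteq[\ell_i\mathbin{..}r_i]$, I rely on the fact that the target leaf $\leaf{k}$ has not yet been deleted, so it represents the suffix $T[k\mathbin{..}r_i]$ of the current window; this gives both $\ell_i\le k=\ell-|u|$ and $r=k+|ux|-1\le r_i$, as required.

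The step I expect to require a little care, rather than sheer calculation, is justifying that the leaf pointer at $v$ really does point to a leaf that is alive in the current sliding suffix tree. This is not a property of the constant-time extraction step itself, but rather an invariant that must be preserved by the leaf-pointer maintenance routines introduced in Section~\ref{sec:update_leafpointers}; for the purpose of this lemma I simply assume that invariant holds and defer its proof to the later section. Once that invariant is granted, both the constant-time bound and the strong-freshness bound $[\ell-|u|\mathbin{..}r]\subseteq[\ell_i\mathbin{..}r_i]$ follow immediately from the argument above.
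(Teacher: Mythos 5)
Your proposal is correct and matches the paper's own argument: follow the leaf pointer from $v=ux$ to a live leaf $\leaf{k}$, output $\langle k+|u|,\,k+|ux|-1\rangle$, and observe that $[k..r_i]\subseteq[\ell_i..r_i]$ gives the strong-freshness inclusion. The extra remarks about reading $|u|,|ux|$ from stored depths and about the liveness of the pointed-to leaf being exactly the maintained invariant are consistent with the lemma's hypothesis and do not change the argument.
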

\begin{example}
  Consider $W_i=\mtt{abczabcyyabcyyz}$ and ${T=PW_i}$ for some $P\in \Sigma^*$,
  $u=\mtt{abc}$, $v=ux=\mtt{abcyy}$.
  Then, $u$ and $v$ are branching nodes.
  Regardless of the content of $P$, we can get an index-pair of the edge $u\rightarrow v$ with label $x=\mtt{yy}$
  as follows:
  Traverse the leaf pointer of $v$ to arrive at one of its descendant leaves, say,
  $\leaf{\ell_i+4}$ corresponding to the suffix $\mtt{abcyyabcyyz}$.
  Then, we get the index-pair $\langle \ell_i+4+|u|, \ell_i+4+|u|+|x|-1\rangle=\langle \ell_i+7, \ell_i+8 \rangle$,
  which gives us the correct label, since $W_i[8..9]=\mtt{yy}$.
  Figure~\ref{fig:GetIndexPair2} shows the relevant part of the suffix tree of $W_i$, namely the locations starting with $\mtt{a}$.
  \label{ex:GetIndexPair}
\end{example}

We note that, even if the edge $u\rightarrow v$ with index-pair $\langle \ell , r\rangle$ is a fresh edge in the weak sense, i.e.,
$[\ell .. r] \subseteq [\ell_i .. r_i]$,
we cannot use the opposite of the above method to obtain a leaf pointer 
of node $u$.
We can obtain the start-index $\ell - |u|$, but the corresponding $\leaf{\ell-|u|}$ may already be deleted, i.e., $\ell -|u|<\ell_i$,
unless the edge label is strongly fresh, which cannot be assumed for sliding suffix tree construction algorithms in general.

Theorem~\ref{theo:main} and Lemma~\ref{the:getPair} immediately lead to the following:
\begin{corollary}
  Let $T$ be a string of length $n$
  and $d$ be the size of the sliding window $W_i = T[\ell_i..r_i]$ over $T$.
  There is an algorithm of $O(d)$ space that resolves Task (3)
  in $O(1)$ worst-case time per leaf insertion/deletion.
\end{corollary}

We will prove Theorem~\ref{theo:main} in Section~\ref{sec:update_leafpointers}.

\section{Maintaining leaf pointers in $O(1)$ worst-case time}
\label{sec:update_leafpointers}

Our leaf pointer algorithm maintains the following invariants:
For each internal node $x$ with $c>0$ children,
exactly one child node is designated as a \emph{primary} node, while each of the remaining $c-1$ children are designated as \emph{secondary} nodes.
An edge is similarly called primary or secondary depending on its destination node.
For convenience, the root will be considered a secondary node.

Let $S$ be the set of (arbitrarily chosen) primary nodes.
Then, for each secondary internal node $x$, we maintain the \emph{primary leaf pointer} $\mathit{PLP}_S(x)$ that points to the leaf node reached by traversing down the primary edges from $x$.
It is clear that the path of primary edges from a secondary node $x$
leads to a leaf (see also Figure~\ref{fig:leaf_pointer_decomposition} for illustrations).
For each secondary leaf node $x$, we let $\mathit{PLP}_S(x) = x$.
When the choice $S$ of primary nodes in the sliding suffix tree
is clear from the context, we abbreviate $\mathit{PLP}_S(x)$ as $\plp{x}$.

\begin{figure}[h]
  \centering
  \includegraphics[scale=0.4]{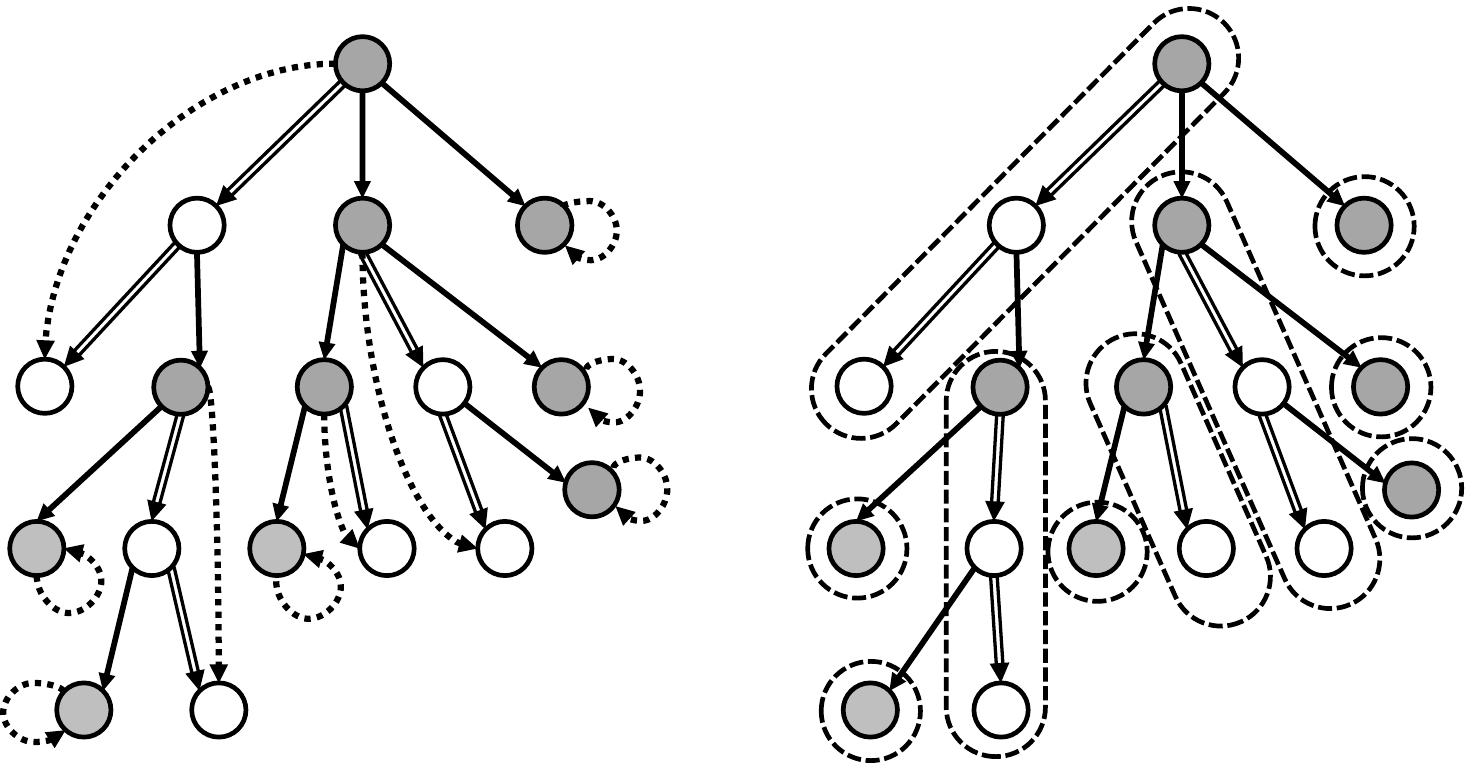}
  \caption{The white and gray nodes represent primary and secondary nodes, respectively. The double-lined solid arrows represent primary edges, while the single-lined solid arrows represent secondary edges. Left: The dotted arrow from each secondary node $x$ represents the primary leaf pointer $\plp{x}$. Right: Our primary/secondary edges induce a path-decomposition of the tree which can be seen as a relaxed version of the well-known heavy-path decomposition, where our primary edge of each node can be \emph{any} of its outgoing edges.}
  \label{fig:leaf_pointer_decomposition}
\end{figure}

Assuming the above invariants are maintained, we can answer the leaf pointer query for any internal node $x$ as follows:
If $x$ is secondary, return $\plp{x}$.
Otherwise, $x$ is a primary (thus non-root) internal node, meaning it has at least two children, so it must have a secondary child $y$.
In this case, return $\plp{y}$ for an arbitrarily chosen secondary child $y$.
See Figure~\ref{fig:plp} for a concrete example.

\begin{figure}[h!]
  \centering
  \includegraphics[scale=0.4]{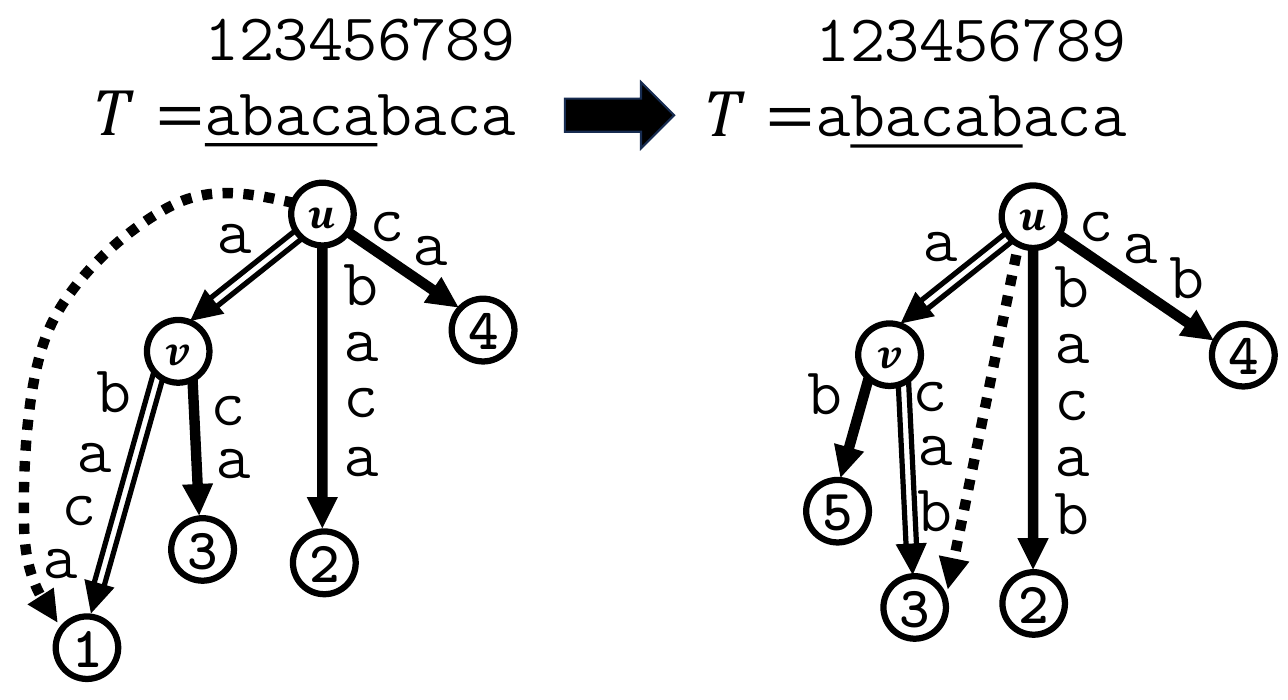}\caption{
    Primary leaf pointers maintained across two iterations of a sliding suffix tree.
    Primary edges are depicted by double-lined arrows, secondary edges by single-lined arrows, and primary leaf pointers by dotted arrows.
    The primary leaf pointers from each secondary leaf that point to themselves are omitted in this figure.
    Left (before the update): The secondary branching node $u$ returns $\plp{u} = \leaf{1}$,
    while a leaf pointer query on the primary branching node $v$ returns the primary leaf pointer of its secondary child $\leaf{3}$, resulting in $\plp{\leaf{3}}=\leaf{3}$.
    Right (after the update): Queries on $u$ and $v$ return $\leaf{3}$ and $\leaf{5}$, respectively.
  }
  \label{fig:plp}
\end{figure}

The invariants above also guarantee that each leaf only has one primary leaf pointer pointing to it, making updates in constant time and linear space feasible.

\begin{lemma} \label{lem:onePLP}
  For each leaf $u$ of the tree, there is exactly one primary leaf pointer that points to $u$.
  Equivalently, the primary paths induce a one-to-one correspondence between secondary nodes and leaves. Hence every leaf has a unique inverse PLP pointer.
\end{lemma}

\begin{proof}
  By categorizing the edges of the tree into primary or secondary ones,
  one can induce a decomposition of the tree to
  a disjoint set of paths (see also Figure~\ref{fig:leaf_pointer_decomposition}).
  Here, each path $\mathcal{P}$ in the decomposition
  begins with a secondary node $x$, continues with a series of primary nodes (possibly empty), and ends with a leaf $u$.
  Thus, the leaf $u$ is pointed by the primary leaf pointer from the secondary node $x$ that is the beginning node of the path $\mathcal{P}$.
  It is clear that the leaf $u$ is pointed only by this secondary node $x$
  that belongs to the same path $\mathcal{P}$. \qed
\end{proof}

When the suffix tree for $W_{i-1} = T[\ell_{i-1}..r_{i-1}]$ is updated to
the suffix tree for $W_i = T[\ell_i..r_i]$ at the $i$-th iteration,
the structure of the suffix tree can change.
This structural change requires us to maintain primary leaf pointers efficiently,
as some leaves and internal nodes can be removed and/or added.
The following property is known:

\begin{lemma}[Larsson~\cite{Larsson1996}] \label{lem:total_updates}
  For any string $T$ of length $n$, the total number of nodes removed and/or added is $O(n)$ regardless of the (fixed) size of the sliding window.
\end{lemma}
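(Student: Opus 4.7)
My plan is to bound separately three contributions: (i) nodes added by right-end insertions via Ukkonen's algorithm, (ii) nodes added by left-end deletions, and (iii) nodes removed by left-end deletions, and then sum them. Each iteration consists of one insertion phase and one deletion, so controlling these three counts over all $|T|$ iterations suffices.

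For (i), I would use the fact that each explicit extension of Ukkonen's algorithm creates at most one new leaf and, when the active point lies strictly inside an edge, at most one new internal node from the edge split. The analysis recalled in Section~\ref{sec:delete} already establishes that the total number of explicit extensions over all $|T|$ phases is $O(|T|)$: the start-index of $\lrs$ is non-decreasing both between consecutive phases and across deletions, and any given extension index is reused in at most two consecutive phases. This immediately yields $O(|T|)$ additions from insertions over the entire run.

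For (ii) and (iii), I would inspect the deletion procedure from Section~\ref{sec:delete} and observe that each sliding-window shift touches $O(1)$ nodes. When $\lrs$ lies on the edge to $\leaf{\ell_i}$, the edge is shortened by replacing $\leaf{\ell_i}$ with a shorter leaf, contributing $O(1)$ additions and removals. Otherwise, the edge $\edge{w}{\leaf{\ell_i}}$ is removed, and if its parent $w$ becomes non-branching, $w$ is unallocated and the surrounding edges $\edge{x}{w}$ and $\edge{w}{y}$ are merged into $\edge{x}{y}$, again affecting only $O(1)$ nodes. Since at most $|T|$ deletions are performed over the whole process, this sums to $O(|T|)$ additions and removals from the deletion side.

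Summing the three contributions yields the claimed $O(|T|)$ bound, with no dependence on $|W|$. The only non-trivial step, and hence the main potential obstacle, is the bound on the total number of explicit extensions: a single phase can in principle fire $\Theta(|W|)$ extensions, so a naive per-iteration count would not suffice. This is handled by the classical amortized argument via the monotonicity of the start-index of $\lrs$, which, as already noted in Section~\ref{sec:delete}, is preserved in the sliding-window setting precisely because the deletion procedure never decreases that start-index.
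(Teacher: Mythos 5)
Your argument is correct. Note, however, that the paper does not actually prove this lemma: it is stated as a known result and attributed to Larsson~\cite{Larsson1996}, with no proof supplied. What you have written is therefore a self-contained reconstruction rather than a parallel to an in-paper proof. Your reconstruction assembles exactly the ingredients the paper does establish elsewhere, in Section~\ref{sec:delete}: the $O(|T|)$ bound on the total number of explicit extensions via the non-decreasing start-index of $\lrs$ (which, as the paper argues, is preserved when deletions are interleaved with Ukkonen phases), each explicit extension creating at most one leaf and at most one internal node, and each left-end deletion removing or adding only $O(1)$ nodes (one leaf, possibly one merged-away internal node, and in the shortening case one replacement leaf $\leaf{k}$). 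Summing over at most $|T|$ phases and $|T|$ deletions gives the claimed bound independent of $|W|$. You correctly identify the only non-trivial step --- that a single phase may fire $\Theta(|W|)$ extensions, so the bound must be amortized --- and handle it with the standard monotonicity argument. I see no gap.
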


Below, we describe our algorithm which
maintains primary leaf pointers in $O(1)$ worst-case time
per leaf inserted into or deleted from the sliding suffix tree.
This together with Lemma~\ref{lem:total_updates}
will give us an $O(n)$ total time updates of the sliding suffix tree.

The key to our efficient algorithm is Lemma~\ref{lem:onePLP}
which states that the primary leaf pointer function $\plp{\cdot}$ is an injection.
Namely, only a constant number of primary leaf pointers need to be
updated per leaf insertion/deletion.
We will also assume that the inverse pointers of PLPs are maintained, so that
when $\plp{x} = u$ we can also access $x$ from $u$ in constant time.
Whenever we create, delete, or redirect a PLP, we also update its inverse pointer accordingly.

In what follows, the \emph{primary path} of a PLP from a secondary node $x$ denotes the path consisting of the primary edge(s) from $x$ to the leaf destination of $\plp{x}$.
Similarly, the primary path of any node $x$ refers to the path of primary edges from $x$ to a leaf.

\subsection{Handling node insertions on nodes}\label{sec:case1}
Our procedure for inserting leaf $u$ as a child of an existing internal node $w$ is as follows:

\begin{description}
  \item[Case 1-1:] When $w$ has no child (Figure~\ref{fig:Case1-1}).
    \begin{description}
      \item[Procedure:] Make $u$ primary and set $\plp{w}=u$. 
      \item[Correctness:] 
        This case only happens when $w$ is the root
        because inserting a new leaf under an existing leaf does not occur in suffix trees of non-empty strings.
        The correctness is trivial.
    \end{description}

  \item[Case 1-2:] When $w$ has a child (Figure~\ref{fig:Case1-2}).
    \begin{description}
      \item[Procedure:] Make $u$ secondary and initialize $\plp{u}=u$.
      \item[Correctness:]
        In this case, no existing primary path changes, and therefore no existing PLP needs to be updated.
        A new secondary node $u$ is created, thus $\plp{u}$ needs to be created. Since $u$ is a leaf, $u$ itself is the correct destination of $\plp{u}$.
    \end{description}
\end{description}

\begin{figure}[h]
  \centering
  \begin{minipage}{.5\textwidth}
    \centering
    \includegraphics[scale=0.45]{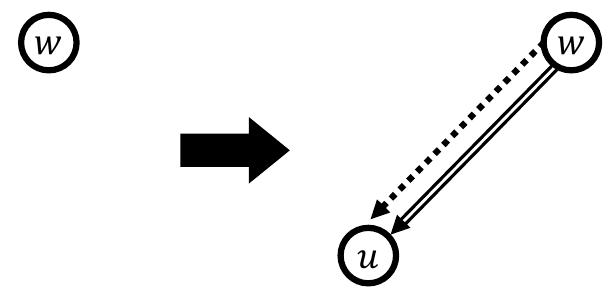}
    \caption{Case 1-1.}
    \label{fig:Case1-1}
    \end{minipage}\begin{minipage}{.5\textwidth}
    \centering
    \includegraphics[scale=0.45]{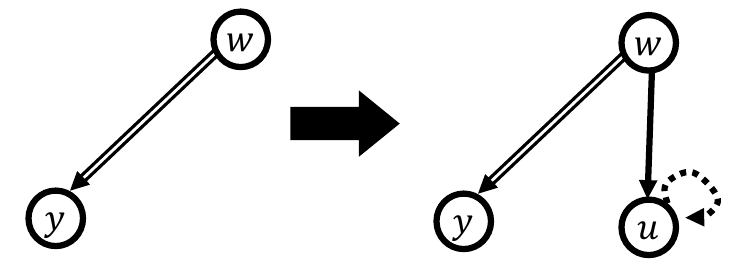}
    \caption{Case 1-2.}
    \label{fig:Case1-2}
  \end{minipage}
\end{figure}

\subsection{Handling node insertions on edges}\label{sec:case2}
Here, we describe our procedure for inserting a leaf $u$, such that an edge $x\rightarrow y$ is split into $x\rightarrow w$ and $w\rightarrow y$, where $w$ is a newly created branch node and is the parent of $u$.

\begin{description}
  \item[Case 2-1:] When $y$ is primary (Figure~\ref{fig:Case2-1}).
    \begin{description}
      \item[Procedure:] Make $w$ primary, $u$ secondary, and initialize $\plp{u}=u$. 
      \item[Correctness:]     In this case, the primary path that includes $x\rightarrow y$ is split.
        Thus, the primary path that contained $x\rightarrow y$ has changed.
        However, its destination clearly remains the same, namely the leaf at the end of the primary path of $y$.
        Thus, no change for existing PLPs is needed.
        Here, again a secondary leaf node $u$ is created, and thus its PLP needs to be created to point to itself.

    \end{description}

  \item[Case 2-2:] When $y$ is secondary (Figure~\ref{fig:Case2-2}).
    \begin{description}
      \item[Procedure:] Make $w$ secondary, $u$ primary, and initialize $\plp{w}=u$.
      \item[Correctness:]     In this case, the edge being split $x\rightarrow y$ is secondary, thus no existing PLP changed its primary path.
        Additionally, a new secondary node $w$ is created, so $\plp{w}$ needs to be created and initialized.
        Since $y$ is made to remain secondary, the primary child of $w$ has to be the newly created $u$.
        Since $u$ is a primary leaf child of $w$, it is the correct destination of $\plp{w}$.

    \end{description}
\end{description}

\begin{figure}[h]
  \centering
  \begin{minipage}{.5\textwidth}
    \centering
    \includegraphics[scale=0.45]{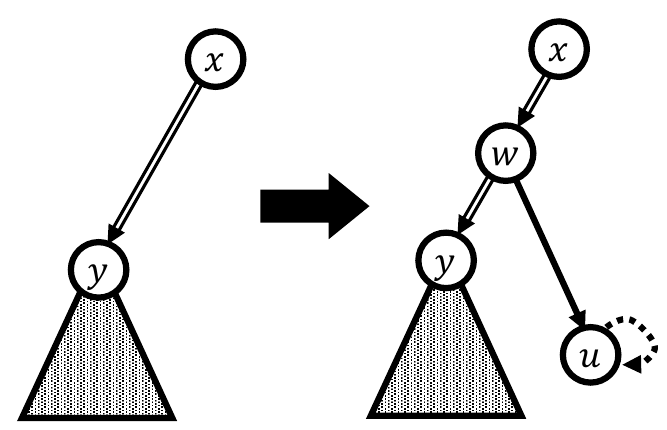}
    \caption{Case 2-1.}
    \label{fig:Case2-1}
    \end{minipage}\begin{minipage}{.5\textwidth}
    \centering
    \includegraphics[scale=0.45]{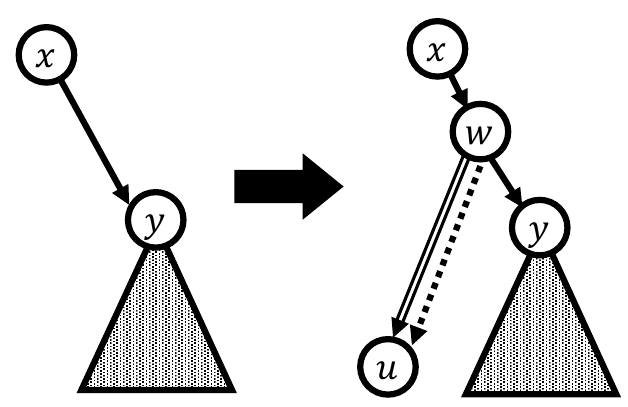}
    \caption{Case 2-2.}
    \label{fig:Case2-2}
  \end{minipage}
\end{figure}

\subsection{Handling leaf deletions under root parent node}\label{sec:case3}
Here, we describe our procedure for deleting a leaf $u$, when its parent node $w$ is the root.
Note that no merging occurs in Cases 3-1--3-3 since $w$ has no parent.

\begin{description}
  \item[Case 3-1:] When $u$ is primary and $w$ has no other children (Figure~\ref{fig:Case3-1}).
    \begin{description}
      \item[Procedure:] Set $\plp{w}=w$.
      \item[Correctness:] In this case, the tree consists only of the root after the update, thus the only PLP that exists is from the root $w$ pointing to itself.
    \end{description}

  \item[Case 3-2:] When $u$ is primary and $w$ has at least one other child (Figure~\ref{fig:Case3-2}).
    \begin{description}
      \item[Procedure:] Make $y$ primary and set $\plp{w}=v$, where $y$ is any secondary child of $w$ and $v$ is the leaf that $\plp{y}$ points to.
      \item[Correctness:] Here, the primary child $u$ of $w$ is deleted, but $w$ has at least one more child $y$ remaining.
        Thus, $y$ is changed into primary to maintain the invariant of child-having nodes having exactly one primary child.
        Consequently, $y$ no longer has a PLP after the update.
        Also, the primary path from $w$ changes into $w\rightarrow y\rightarrow \cdots$.
        Since the subtree of $y$ remains unchanged, clearly the destination of this primary path, which is also the correct new destination of $\plp{w}$,
        is the former destination of $\plp{y}$ before the update. 
    \end{description}   

  \item[Case 3-3:] When $u$ is secondary (Figure~\ref{fig:Case3-3}).
    \begin{description}
      \item[Procedure:] No further updates.
      \item[Correctness:] Since the node and edge deleted are secondary, and since no merging occurs, all existing primary paths remain the same, necessitating no change in existing PLPs that remain (i.e., undeleted).
    \end{description}
\end{description}

\begin{figure}[h]
  \begin{minipage}{.25\textwidth}
    \centering
    \includegraphics[scale=0.4]{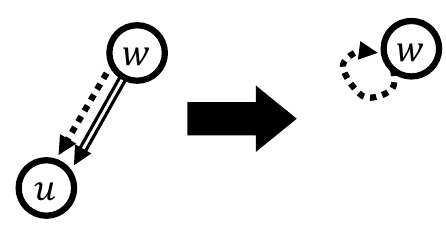}
    \caption{Case 3-1.}
    \label{fig:Case3-1}
  \end{minipage}
  \hfill
  \begin{minipage}{.4\textwidth}
    \centering
    \includegraphics[scale=0.4]{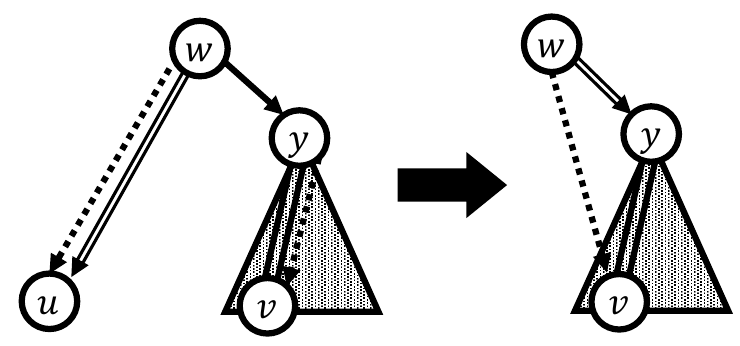}
    \caption{Case 3-2.}
    \label{fig:Case3-2}
  \end{minipage}
  \hfill
  \begin{minipage}{.3\textwidth}
    \centering
    \includegraphics[scale=0.4]{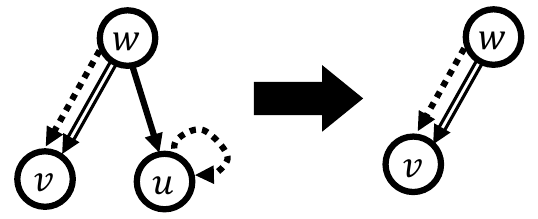}
    \caption{Case 3-3.}
    \label{fig:Case3-3}
  \end{minipage}
\end{figure}

\subsection{Handling primary leaf deletions under non-root parent node}\label{sec:case4}
Here, we describe our procedure for deleting a primary leaf $u$, when its parent node $w$ is not the root.
Let $x$ be the parent of $w$, $y$
any secondary child of $w$, $v$ the leaf $\plp{y}$ points to,
and $z$ the node such that $\plp{z}=u$.
Since $w$ is not the root, it has at least two children.

\begin{description}
  \item[Case 4-1:] When $w$ has more than two children (Figure~\ref{fig:Case4-1}).
    \begin{description}
      \item[Procedure:] Make $y$ primary, and set $\plp{z}=v$.
      \item[Correctness:] In this case, $z$ is $w$ or its ancestor, and $w$ does not get deleted. Here, the deletion of the primary node $u$ and its incoming primary edge necessitates the appointment of another child of $w$ to become primary,
        and also induces a change on the primary path of the PLP that pointed to $u$, namely that from $z$.
        Similarly to Case 3-2, $y$ is chosen as the new primary child of $w$ and thus the primary path from $z$ now goes into $y$,
        and due to the unchanged subtree of $y$, $v$ is the correct new destination of $\plp{z}$.
    \end{description}

  \item[Case 4-2:] When $w$ has two children and is primary (Figure~\ref{fig:Case4-2}).
    \begin{description}  
      \item[Procedure:] Make $y$ the primary child of $x$, and set $\plp{z}=v$.
      \item[Correctness:] In this case, $z$ is $x$ or its ancestor.
        Since $w$, which was primary, becomes non-branching and gets deleted due to merging, its parent $x$ needs to appoint a new primary child, for which $y$ is chosen.
        Thus, the primary path from $z$ now goes into $y$, and again we know the primary path destination from $y$ is $v$, the former destination of $\plp{y}$
        before the update, and therefore $v$ is the correct destination of $\plp{z}$.

    \end{description}

  \item[Case 4-3:] When $w$ has two children and is secondary (Figure~\ref{fig:Case4-3}).
    \begin{description}
      \item[Procedure:] Make $y$ a secondary child of $x$.
      \item[Correctness:] In this case, $z=w$, and $w$ gets deleted due to merging.
        Since $w$ is secondary, the only affected primary path is the deleted path from $w$ to $u$; all primary paths corresponding to remaining PLPs stay unchanged.
    \end{description}
\end{description}

\begin{figure}[h]
  \centering
  \includegraphics[scale=0.45]{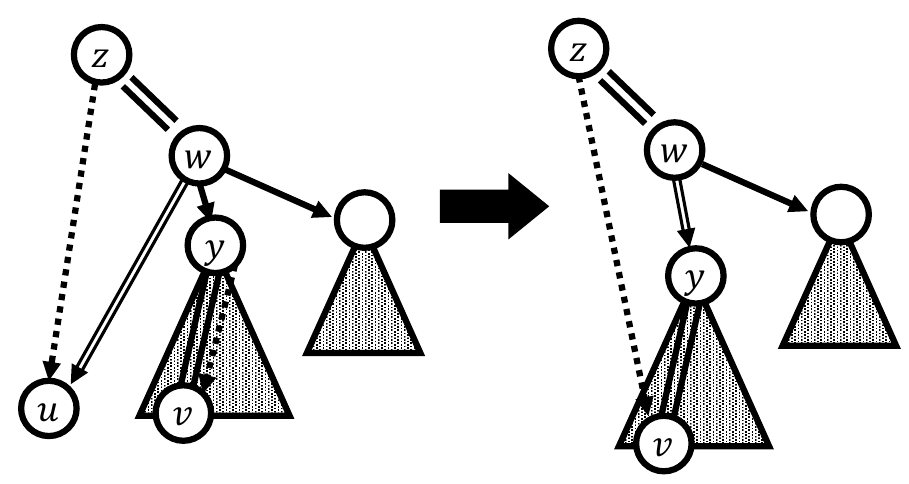}
  \caption{Case 4-1.}
  \label{fig:Case4-1}
\end{figure}

\begin{figure}[h]
  \begin{minipage}{.5\textwidth}
    \centering
    \includegraphics[scale=0.45]{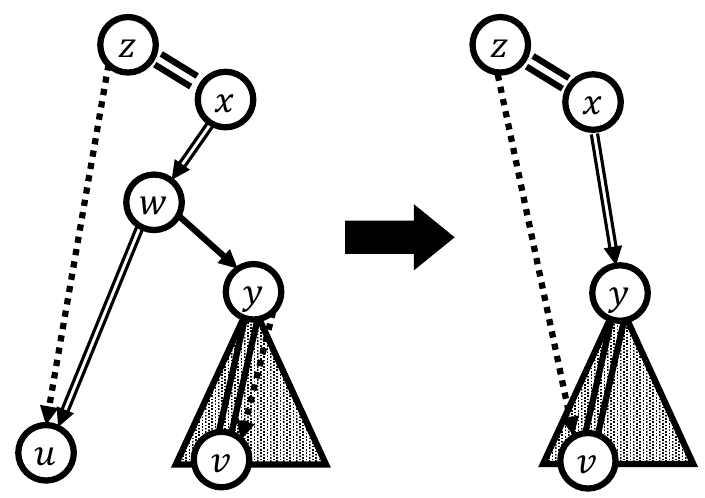}
    \caption{Case 4-2.}
    \label{fig:Case4-2}
  \end{minipage}
  \hfill
  \begin{minipage}{.5\textwidth}
    \centering
    \includegraphics[scale=0.45]{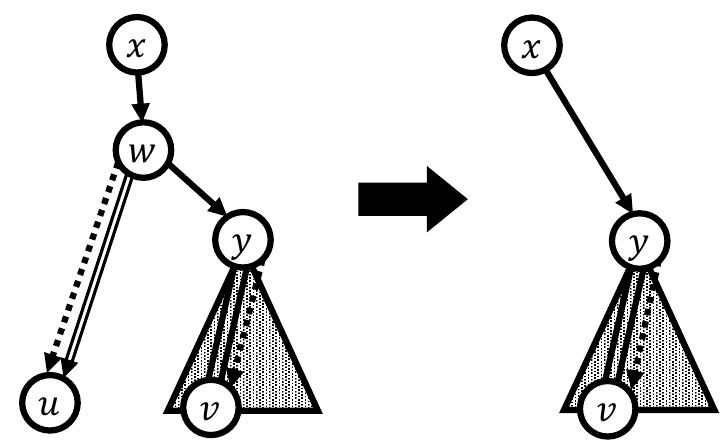}
    \caption{Case 4-3.}
    \label{fig:Case4-3}
  \end{minipage}
\end{figure}

\subsection{Handling secondary leaf deletions under non-root parent node}\label{sec:case5}
Here, we describe our procedure for deleting a secondary leaf $u$ and its parent node $w$ is not the root.
Let $x$ be the parent of $w$ and
$y$ the primary child of $w$.
Since $w$ is not the root, it has at least two children.

\begin{description}
  \item[Case 5-1:] When $w$ has more than two children (Figure~\ref{fig:Case5-1}).
    \begin{description}
      \item[Procedure:] No further updates.
      \item[Correctness:] In this case, $w$ does not get deleted.  Since only a secondary node is deleted and no merging occurs, no existing primary paths change.
    \end{description}

  \item[Case 5-2:] When $w$ has two children and is primary (Figure~\ref{fig:Case5-2}).
    \begin{description}
      \item[Procedure:] Make $y$ the primary child of $x$.
      \item[Correctness:] In this case, the deletion of $u$ causes $w$ to be deleted due to merging, and thus the primary path that included $x\rightarrow w \rightarrow y$ changes.
        However, the destination remains the same, namely the destination of the primary path from $y$. Thus, no existing PLPs change.
        Additionally, $x$ remains having exactly one primary child, by letting $y$ remain primary.

    \end{description}   

  \item[Case 5-3:] When $w$ has two children and is secondary (Figure~\ref{fig:Case5-3}).
    \begin{description}
      \item[Procedure:]
        Make $y$ a secondary child of $x$ and set $\plp{y}=v$,
        where $v$ is the leaf pointed to by $\plp{w}$.
      \item[Correctness:] Here, the deletion of $u$ causes $w$ to be deleted due to merging.
        Since $y$ is made secondary, the primary child of $x$ remains unchanged, but $\plp{y}$ needs to be initialized.
        Since $y$ was formerly primary, the former destination of $\plp{w}$ is in the unchanged subtree of $y$, namely leaf $v$,
        and therefore $v$ is the correct destination of the new $\plp{y}$.

    \end{description}
\end{description}

\begin{figure}[h]
  \centering
  \includegraphics[scale=0.45]{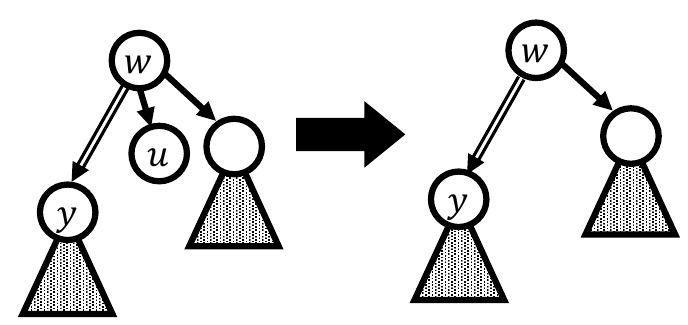}
  \caption{Case 5-1.}
  \label{fig:Case5-1}
\end{figure}

\begin{figure}[h]
  \centering
  \begin{minipage}{.5\textwidth}
    \centering
    \includegraphics[scale=0.45]{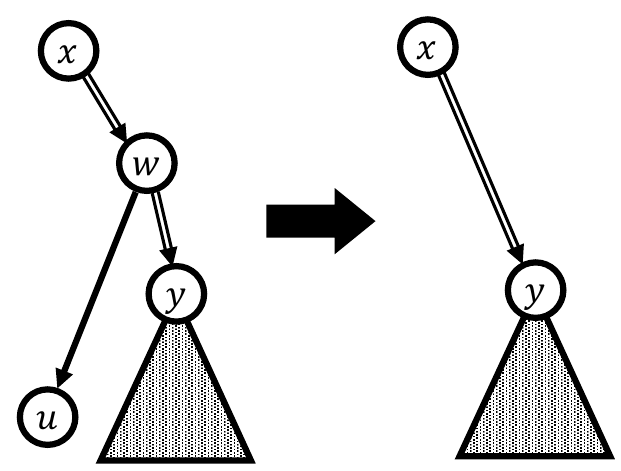}
    \caption{Case 5-2.}
    \label{fig:Case5-2}        
    \end{minipage}\begin{minipage}{.5\textwidth}
    \centering
    \includegraphics[scale=0.45]{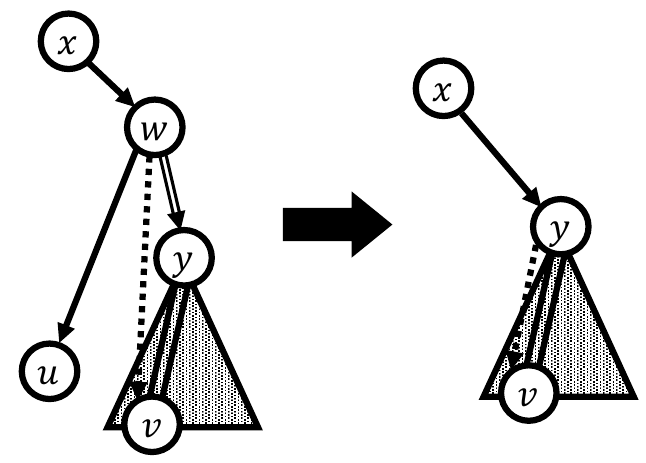}
    \caption{Case 5-3.}
    \label{fig:Case5-3}
  \end{minipage}
\end{figure}

To illustrate the whole picture of our proposed method, we show in Figures~\ref{fig:cases_insert} and~\ref{fig:cases_delete} the decision trees of our leaf-pointer maintenance algorithm
for the cases of right-end insertions and left-end deletions, respectively.

\begin{figure}[tbh]
  \centering
  \includegraphics[scale=0.75]{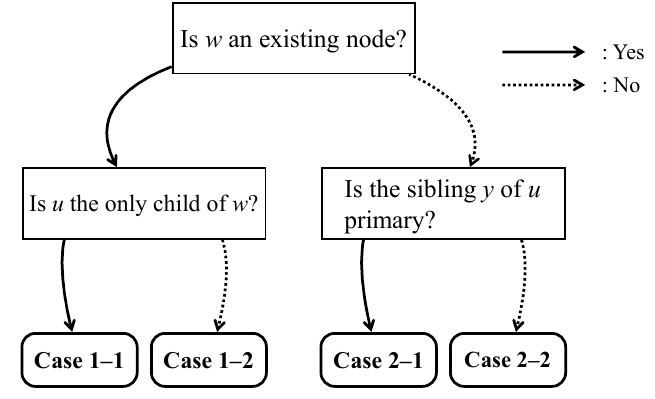}
  \caption{
    The decision tree representing the four cases of Sections~\ref{sec:case1} and \ref{sec:case2},
    when a new node is inserted into the suffix tree
    where $u$ is the new leaf
    and $w$ is the parent of $u$.
  }\label{fig:cases_insert}
\end{figure}

\begin{figure}[tbh]
  \centering
  \includegraphics[width=\linewidth]{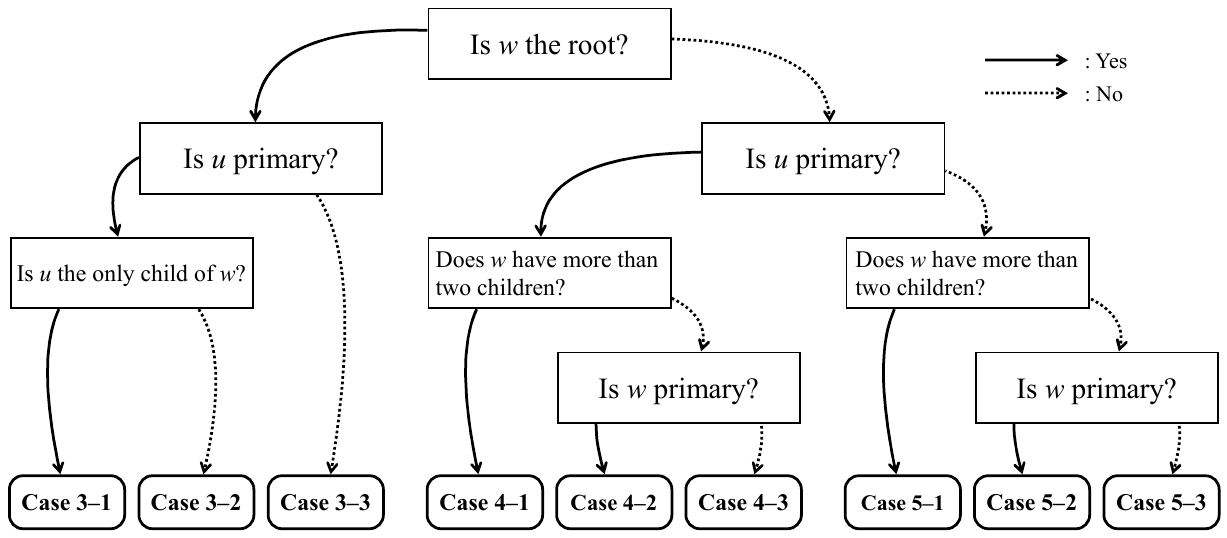}
  \caption{
    The decision tree representing the nine cases of Sections~\ref{sec:case3}, \ref{sec:case4}, and \ref{sec:case5},
    when an old leaf is deleted from the suffix tree
    where $u$ is the leaf to be deleted
    and $w$ is the parent of $u$.
  }\label{fig:cases_delete}
\end{figure}

We conclude this section with the following theorem:

\begin{theorem}
  Our algorithm described above correctly maintains
  leaf pointers on the sliding suffix tree for $W_i = T[\ell_i..r_i]$
  in $O(1)$ time per node insertion/deletion, using $O(d)$ working space.
\end{theorem}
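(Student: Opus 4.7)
The plan is to establish three things separately: (i) the five cases listed in Section~\ref{sec:update_leafpointers} exhaustively cover every structural change that can occur, (ii) each case preserves the PLP invariants, and (iii) the total time and space meet the stated bounds.

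For exhaustiveness, I would first observe that the sliding-suffix-tree maintenance (Section~\ref{sec:tree_structure}) only performs four atomic structural changes: inserting a leaf below an existing internal node, splitting an edge by inserting an internal node together with a new leaf, deleting a leaf, and possibly contracting a resulting degree-two internal node with its parent. Cases~1 and~2 cover the two insertion patterns; cases~3,~4, and~5 partition the deletions by whether the leaf's parent is the root, and by whether the deleted leaf is primary or secondary. Within each family the subcases partition on the remaining degree of the parent (and the primary/secondary status of the parent itself), so every possible structural update matches exactly one case.

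For correctness, for each case I would verify that after the local update the following three invariants still hold: (a) every non-root internal node has exactly one primary child, so the primary/secondary decomposition of the tree is well-defined; (b) for each secondary internal node $x$, $\plp{x}$ equals the leaf reached by following primary edges down from $x$; and (c) the injection property of Lemma~\ref{lem:onePLP} is preserved. The insertion cases~1 and~2 are straightforward because the new leaf either begins a brand-new primary path (Cases~1-1,~2-2) or joins a new secondary subtree initialized to $\plp{u}=u$ (Cases~1-2,~2-1). The subtle cases are the primary-leaf deletions in Section~\ref{sec:case4}: when the primary leaf $u$ is deleted, the unique ancestor $z$ with $\plp{z}=u$ lies at the head of the primary path through $w$, and promoting a secondary sibling $y$ of $u$ to primary extends this same path to the new leaf $v=\plp{y}$; setting $\plp{z}=v$ preserves (b), while the disappearance of the old pointer preserves (c). In Case~4-2 the degree-two node $w$ is additionally contracted, but $y$ inherits the primary role formerly played by $w$, so the path headed by $z$ is simply shortened by one edge. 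The deletion cases~3 and~5 are checked analogously. The only reason these updates can be carried out without any tree traversal is the assumed inverse of $\plp{\cdot}$, which lets us locate $z$ from $u$ in $O(1)$ time; I would state this explicitly as the enabling ingredient.

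For the complexity, each case inspects only a constant number of nearby nodes and updates a constant number of pointers, so the cost per leaf insertion or deletion is $O(1)$. Combined with Lemma~\ref{lem:total_updates}, which bounds the total number of node additions and removals by $O(|T|)$, the total PLP-maintenance work over all iterations is $O(|T|)$. The space bound follows because we store only $O(1)$ words per node (the primary/secondary flag, the forward PLP, and its inverse), and the sliding suffix tree itself has $O(|W_i|)$ nodes. I expect the main obstacle to be the Case~4 arguments: pinning down which ancestor $z$ satisfies $\plp{z}=u$ after the various possible $w$-degrees and primary/secondary statuses, and verifying that the reassignment preserves injectivity in the presence of the edge contraction; everything else is essentially a bookkeeping check against the pictures in Figures~\ref{fig:Case1-1}--\ref{fig:Case5-3}.
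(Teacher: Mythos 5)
Your proposal is correct and follows essentially the same route as the paper: an exhaustive case-by-case verification that each local update preserves the primary/secondary invariants and the PLP destinations (with the inverse PLP pointer enabling the $O(1)$ lookup of $z$ in Case 4), combined with Lemma~\ref{lem:total_updates} to bound the total work by $O(|T|)$. The only quibble is the phrase in Case~4-2 that the path headed by $z$ is ``shortened by one edge''---it is in fact rerouted through the subtree of $y$ down to $v$---but this does not affect the argument, since the conclusion $\plp{z}=v$ is exactly what the paper establishes.
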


\begin{proof}
  The described steps to maintain the invariants cover all possible cases of insertion and deletion in a sliding suffix tree, and each step can clearly be handled in $O(1)$ worst-case time.
  This can also be observed from our decision trees in Figures~\ref{fig:cases_insert} and~\ref{fig:cases_delete}.
  The correctness of all the cases has already been shown.
  This allows us to answer any leaf pointer query in $O(1)$ worst-case time.
  The working space is clearly $O(d)$. \qed
\end{proof}

We remark that our leaf-pointer maintenance algorithm works
in $O(1)$ worst-case time even when the window size $d$ is not fixed,
and the working space is linear in the current window size.

\section{$\Theta(d)$-time worst-case lower bound for credit-based method}
\label{sec:loweround}
In this section, we present an instance that requires $\Theta(d)$ worst-case time
per single leaf insertion or deletion
with the credit-based method~\cite{Larsson1996,Larsson1999},
where $d$ is the window size.

Here, we briefly recall the credit-based method
by Larsson~\cite{Larsson1996,Larsson1999}, with a correction by Senft~\cite{Senft2005} for leaf deletions.
When a leaf is created, then it issues a \emph{credit} to its parent.
Each internal node $u$ maintains a credit counter $\credit(u) \in \{0,1\}$.
We initially set $\credit(u) = 0$ when an internal node $u$ is created.

When an internal node $u$ receives a credit from a child $v$,
then there are two cases:
\begin{enumerate}
  \item If $\credit(u) = 0$, then we set $\credit(u) = 1$.
  \item Otherwise (if $\credit(u) = 1$), then we first spend the existing credit in $u$ by setting $\credit(u) = 0$, and update the label of the edge leading to $u$.
    Second, we pass, to the parent of $u$, the credit that came from the child $v$.
\end{enumerate}

When an internal node $v$ is deleted and two adjacent edges $u \to v$ and $v \to w$ are merged into a single edge $u \to w$, then two credits are issued from $v$ independently of the value of $\credit(v)$.
Then, one credit is used to update the label of the edge $u \to w$,
and the other credit is passed to the parent $u$ of the deleted node $v$.

While we omit the details in this paper, each edge label update can be done in constant time,
and the above procedure keeps all edge labels valid~(see~\cite{Larsson1996,Larsson1999,Senft2005} for details).
At most $n$ leaves are created and deleted
for all sliding windows $W_i$ with $1 \leq i \leq n-d+1$ over a string of length $n$.
Since each leaf issues at most three credits (one for its insertion and two for its deletion), the total number of credits issued is at most $3n$.
Since each edge label update spends a credit,
the time for updating an edge label is amortized $O(1)$ per leaf insertion or deletion.

In the next theorem,
we show that this credit-based method requires $\Theta(d)$ time
per leaf insertion or deletion in the worst case.

\begin{theorem}
  There exists a string for which 
  Larsson's credit-based sliding suffix tree algorithm~\cite{Larsson1996,Larsson1999} requires $\Theta(d)$ edge-label updates for some single leaf insertion or deletion, where $d$ is the window size.
\end{theorem}

\begin{figure}[tbh]
  \centering
  \includegraphics[scale=0.4]{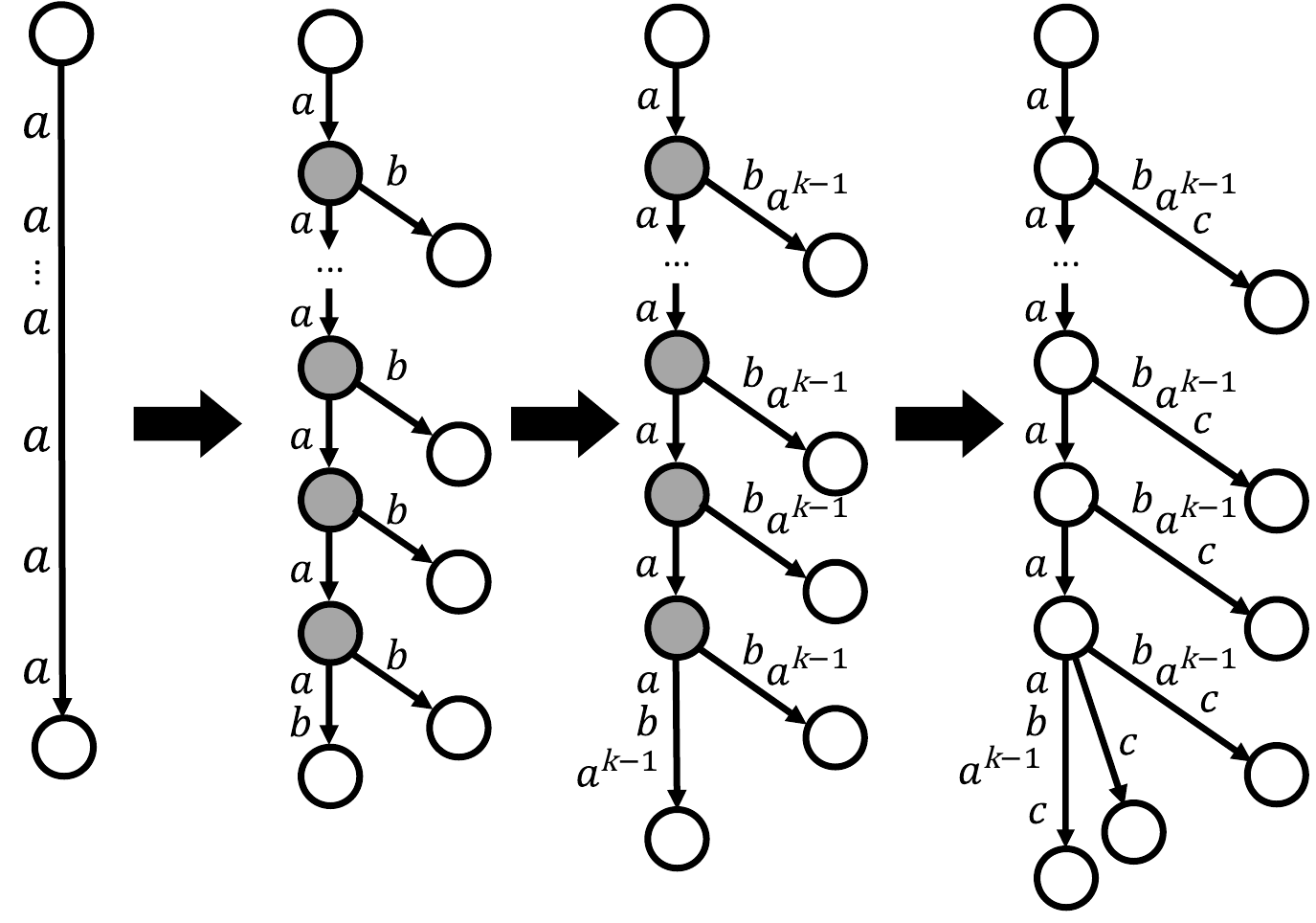}
  \caption{Illustration of $\Theta(d)$-time update of credit-based methods per single leaf insertion. The gray and white nodes represent internal nodes with credit value 1 and 0, respectively. Note that the subtrees starting with characters other than $a$ are omitted for clarity.}
  \label{fig:WorstCaseInsertion}
\end{figure}

\begin{proof}
  The following credit configuration is reachable by the standard Ukkonen-style insertion phase used in the sliding suffix tree algorithm.
  Here we measure the worst-case cost of an individual leaf-insertion suboperation occurring during one window shift; the deletion suboperation in the same shift is not charged to this leaf insertion.

  For the case of leaf insertion, consider a string that has $W = a^kba^{k-1}$ as its prefix
  of length $2k = d$.
  We then append a new character $c$ to the right end of $W$.
  See also Figure~\ref{fig:WorstCaseInsertion} for illustrations.
  Since our string begins with $W = a^kba^{k-1}$,
  we first build $\STree(W)$ in an online manner, from left to right,
  as is done by Ukkonen's construction~\cite{Ukkonen}.
  Below, we explain how the sliding suffix tree is built and updated.
  \begin{itemize}
    \item Build $\STree(a^k)$ that consists only of a single path (see also the first step of Figure~\ref{fig:WorstCaseInsertion}).

    \item Then, we update it to $\STree(a^kb)$,
      in which the new leaves $a^hb$ and their parents $a^h$
      are inserted in descending order for $h = k-1, \ldots, 1$ (see also the second step of Figure~\ref{fig:WorstCaseInsertion}).
      When the internal node $a^h$ is inserted, we initially set $\credit(a^h) = 0$.
      As soon as the leaf $a^hb$ is inserted as a new child of $a^h$,
      then $a^h$ receives a credit from this new leaf and
      we set $\credit(a^h) = 1$.
      Thus, every node $a^h$ for $1 \leq h \leq k-1$ obtains $\credit(a^h) = 1$
      on $\STree(a^kb)$.

    \item Next, we update $\STree(a^kb)$ to $\STree(a^kba^{k-1})$ (see also the third step of Figure~\ref{fig:WorstCaseInsertion}).
      No new leaves are inserted during this update.

    \item Finally, we update $\STree(a^kba^{k-1})$ to $\STree(a^kba^{k-1}c)$ (see also the fourth step of Figure~\ref{fig:WorstCaseInsertion}).
      The first new leaf inserted is $a^{k-1}c$, which is a new child of $a^{k-1}$.
      Since $\credit(a^{k-1}) = 1$,
      we reset it as $\credit(a^{k-1}) = 0$ and pass the credit from the child $a^{k-1}c$ to the parent $a^{k-2}$.
      This propagates to all $a^h$ in decreasing order of $h = k-1, \ldots, 1$.
  \end{itemize}
  Thus, the credit-based method
  requires $\Theta(d)$ edge-label updates
  for a single leaf insertion in the worst case.

  For the case of leaf deletions, we consider a string $W' = a^{d}b$ of length $d+1$.
  See also Figure~\ref{fig:WorstCaseDeletion} for illustrations.
  We have $\credit(a^h) = 1$ for all internal nodes $a^h$ with
  $1 \leq h \leq d-1$ in $\STree(a^{d}b)$.
  Then we delete the leftmost character $a$ from $W'$
  and update $\STree(a^{d}b)$ to $\STree(a^{d-1}b)$.
  This deletes the leaf corresponding to the longest suffix $a^{d}b$ and its parent $a^{d-1}$ that becomes non-branching.
  After the deletion of the internal node $a^{d-1}$,
  a credit is sent to its parent $a^{d-2}$ such that $\credit(a^{d-2}) = 1$.
  We reset $\credit(a^{d-2}) = 0$ and pass the credit to its parent $a^{d-3}$.
  This propagates to all $a^h$ for decreasing $h = d-1, \ldots, 1$.
  Thus, the credit-based method requires $\Theta(d)$ edge label updates for a single leaf deletion in the worst case. \qed
\end{proof}

\begin{figure}[t!]
  \centering
  \includegraphics[scale=0.4]{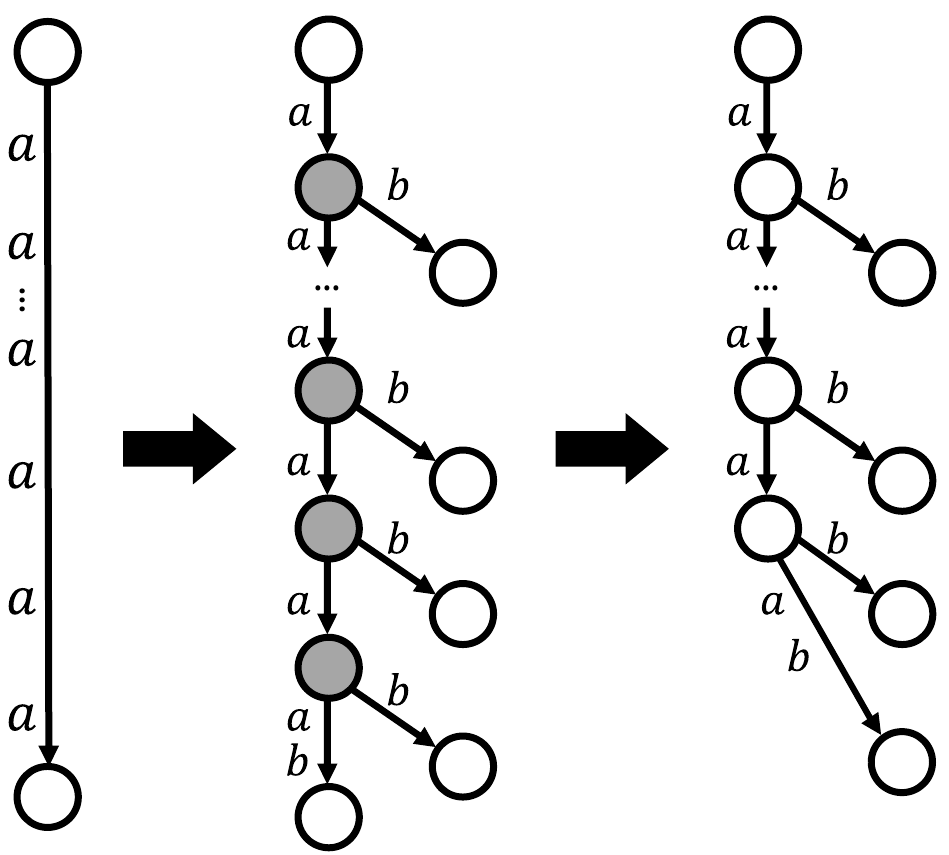}
  \caption{Illustration of $\Theta(d)$-time updates of credit-based methods per single leaf deletion.}
  \label{fig:WorstCaseDeletion}
\end{figure}
\section*{Acknowledgements}
The authors thank the anonymous referees of earlier versions of this paper
for their helpful comments and suggestions for improvement.
This work was supported by JSPS KAKENHI Grant Numbers
JP23K24808, JP23K18466 (SI),
JP24K02899 (HB), and
JP24K20734 (TM).

\clearpage

\end{document}